\documentclass[conference]{IEEEtran}
\IEEEoverridecommandlockouts

\usepackage{url}
\usepackage{cite}
\usepackage{amsmath,amssymb,amsfonts}
\usepackage{amsthm}
\usepackage{algorithm}
\usepackage{algpseudocode}
\usepackage{caption}
\makeatletter
\newenvironment{breakablealgorithm}
  {\par\addvspace{\topsep}
   \refstepcounter{algorithm}
   \noindent
   \hrule height .8pt depth 0pt
   \kern 2pt
   \renewcommand{\caption}[2][\relax]{%
     {\raggedright\textbf{\ALG@name~\thealgorithm} ##2\par}%
     \ifx\relax##1\relax
       \addcontentsline{loa}{algorithm}
         {\protect\numberline{\thealgorithm}##2}%
     \else
       \addcontentsline{loa}{algorithm}
         {\protect\numberline{\thealgorithm}##1}%
     \fi
     \kern 2pt\hrule\kern 2pt
   }}
  {\kern 2pt\hrule\relax
   \par\addvspace{\topsep}}
\makeatother

\usepackage{graphicx}
\usepackage{textcomp}
\usepackage{xcolor}
\newtheorem{Definition}{Definition}
\newtheorem{Lemma}{Lemma}
\newtheorem{Theorem}{Theorem}
\usepackage{comment}
\newtheorem{Remark}{Remark}
\def\BibTeX{{\rm B\kern-.05em{\sc i\kern-.025em b}\kern-.08em
    T\kern-.1667em\lower.7ex\hbox{E}\kern-.125emX}}

\begin{document}

\title{Insdel Reed-Solomon Decoding Algorithms up to the Half-Singleton Bound}

\author{\IEEEauthorblockN{Bencheng Fan}
\IEEEauthorblockA{\textit{Dept. EE Engineering} \\
\textit{The University of Melbourne}\\
Melbourne, Australia \\
benchengf@student.unimelb.edu.au}
\and
\IEEEauthorblockN{Margreta Kuijper}
\IEEEauthorblockA{\textit{Dept. EE Engineering} \\
\textit{The University of Melbourne}\\
Melbourne, Australia \\
mkuijper@unimelb.edu.au}
\and
\IEEEauthorblockN{Jingge Zhu}
\IEEEauthorblockA{\textit{Dept. EE Engineering} \\
\textit{The University of Melbourne}\\
Melbourne, Australia \\
jingge.zhu@unimelb.edu.au}
}

\maketitle

\begin{abstract}
 In this paper, we focus on decoding optimal Reed-Solomon codes of dimension two for insdel errors up to the half-Singleton bound. We present a decoding algorithm for the general insertion-and-deletion scenario with a time complexity smaller than brute force decoding. Our algorithm corrects insdel errors up to the half-Singleton bound, which exceeds the scope of existing methods.  
 Additionally, for the special cases of insertion-only and deletion-only insdel error patterns, we present two different algorithms with a time complexity smaller than our general algorithm. 
\end{abstract}

\section{Introduction}

DNA storage is an emerging data storage technology that uses synthetic DNA molecules to encode and store digital information. In a DNA-based storage system, information is encoded into sequences over the DNA alphabet $\{A, G, C, T \}$. Because of its high data density, energy efficiency, and longevity, DNA-based storage is a promising approach for meeting the ever-growing demands of data storage \cite{ieee_dna_storage_2024}\cite{9504568}.  \\

Despite these advantages, DNA-based storage faces several technological challenges, including high costs, slow read and write speeds, and high error rates in the synthesis and sequencing processes. These errors stem from mutations, degradation, and sequencing errors. More specifically, DNA molecules are susceptible to insertions and deletions (insdel errors)\cite{heckel2019characterization}\cite{8937735}\cite{shomorony2022information}.\\

The study of codes that are resistant to insdel errors was pioneered by Levenshtein, Varshamov and Tenengolts in the 1960s \cite{levenshtein1967asymptotically}\cite{levenshtein1966binary}. In \cite{levenshtein1966binary}, Levenshtein showed that any code capable of correcting $t$ deletions can correct $t$ insertions and deletions.\\

 It is demonstrated in \cite{cheng2023efficient} that the number of insdel errors that an $[n,k]$ linear code can correct is at most $n-2k+1$, which is called the half-Singleton bound. Linear codes that reach the half-Singleton bound are called optimal. Recent literature has shown that it is possible to construct optimal Reed-Solomon codes. 
Chen and Zhang \cite{chen2022improved} provide an explicit construction of $[n,2]$ Reed-Solomon codes that can correct up to the half-Singleton bound. Con \textit{et al.}\cite{con2023reed} generalize this result by providing an explicit construction of an optimal $[n,k]$ Reed-Solomon code for $k\geq 2$ and $n\geq 2k$. Then Con \textit{et al.}\cite{con2024optimal} provide a different construction for optimal $[n,2]$ Reed-Solomon codes. 

So far, the problem of constructing optimal Reed-Solomon codes has been addressed. Moreover, designing efficient decoding algorithms for Reed-Solomon codes against insdel errors has also gained attention in recent years. The recent paper by Singhvi \cite{singhvi2026optimally} addresses this problem by providing an efficient decoding algorithm with time complexity of $\mathcal{O}(n^3)$ that decodes optimal $[n,2]$ Reed-Solomon codes for up to $n-3$ deletions. It also provides a decoding algorithm that can recover any $n-3$ deletions in $\mathcal{O}(n)$ time for a constrained subclass of optimal $[n,2]$ Reed-Solomon codes presented in \cite{con2024optimal}. However, \cite{singhvi2026optimally} is limited to the deletions-only scenario. For $[n,k]$ optimal RS codes in general where $k\geq 2$, Banerjee \textit{et al.}\cite{banerjee2025decoding} propose a polynomial time decoding algorithm that can correct up to $t$ insdel errors as long as $t\cdot k =\mathcal{O}(n)$. We observe that a necessary condition for the method proposed in \cite{banerjee2025decoding} to correct $t$ insdel errors is that $t<(k+1)n-\sqrt{(k^2+2k)n^2+kn}$, which is significantly less than the half-Singleton bound $n-2k+1$. The detailed deduction is in the appendix.

\subsection{Our Results}

In this paper, for optimal $[n,2]$ Reed-Solomon codes, our results are 
three-fold. Firstly, we present a decoding algorithm that corrects up to $n-3$ deletions with a time complexity of $\mathcal{O}(n^3)$, which has the same time complexity as the deletion-only algorithm recently proposed by Singhvi\cite{singhvi2026optimally}. Secondly, we present a decoding algorithm that corrects up to $n-3$ insertions with a time complexity of $\mathcal{O}(n^3)$.  Finally, we present a decoding algorithm against general scenarios of insdel errors with a time complexity of $\mathcal{O}(n^5)$.  Whereas the approach of Singhvi \cite{singhvi2026optimally} focuses on deletion-only error patterns, our algorithm handles arbitrary combinations of insertions and deletions. Moreover, our method corrects up to the half-Singleton bound, which exceeds the scope of \cite{banerjee2025decoding}. 
Although we focus on optimal $[n,2]$ Reed-Solomon codes in this paper, the decoding algorithms presented can be easily generalised to non-optimal $[n,2]$ MDS codes.

\subsection{Organization}
The rest of the paper is organised as follows. In Section II, we introduce some key definitions and known results. In Section III, we discuss the decoding algorithms for optimal $[n,2]$ Reed-Solomon codes up to the half-Singleton bound. In Section IV, we briefly summarise the results presented in this paper and discuss potential directions for further study.

\section{Preliminaries}
For a sequence $\mathbf{x}$ with a length of at least $m$ and a vector of indices $\mathbf{J}=(j_1,j_2,\cdots,j_m)$, we denote the subsequence $(x_{j_1},x_{j_2},\cdots,x_{j_m})$ as $\mathbf{x_J}$. For simplicity, we denote the subsequence with only two entries $(x_{j_1},x_{j_2})$ as $\mathbf{x}(j_1,j_2)$. For a matrix $G\in\mathbb{F}_q^{k\times n}$, we denote its column vectors as $\mathbf{g}_i$, where $i\in\{1,2,\cdots,n\}$. For a vector of indices $\mathbf{J}$, we denote the $k\times m$ matrix $[\mathbf{g}_{j_1},\mathbf{g}_{j_2},\cdots,\mathbf{g}_{j_m}]$ as $G_J$.

\begin{Definition}
    Let $\mathbf{x}$ be a sequence, if a sequence $\mathbf{y}$ can be obtained by deleting certain symbols from $\mathbf{x}$, then $\mathbf{y}$ is called a subsequence of $\mathbf{x}$. If a sequence $\mathbf{y}$ is a subsequence of both $\mathbf{x_1}$ and $\mathbf{x_2}$, then $\mathbf{y}$ is called a common subsequence of $\mathbf{x_1}$ and $\mathbf{x_2}$.
\end{Definition}

\begin{Definition}
    For two sequences $\mathbf{x},\mathbf{y}$, the insdel distance $d(\mathbf{x},\mathbf{y})$ is defined as the minimum number of insertions and deletions to transform $\mathbf{x}$ into $\mathbf{y}$.
\end{Definition}

Similar to the minimum Hamming distance, we define the minimum insdel distance of a code.
\begin{Definition}
    For a code $C$, its minimum insdel distance $d(C)$ is defined as

\begin{equation}
  d(C)=\min_{\mathbf{c_1},\mathbf{c_2}\in C,\mathbf{c_1}\neq \mathbf{c_2}}\{d(\mathbf{c_1},\mathbf{c_2})\}  
\end{equation}
\end{Definition}

In \cite{levenshtein1966binary}, Levenshtein demonstrated that the insdel distance is a metric. He also showed that the minimum insdel distance is the key indicator of a code's capability to correct insdel errors.
\begin{Lemma}
\label{Insdel}
(\cite{levenshtein1966binary}) Let $\mathcal{C}$ be a code over $\mathbb{F}_q$. The code $\mathcal{C}$ can correct $t$ insdel errors if and only if $2t<d(C)$.
\end{Lemma}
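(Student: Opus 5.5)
We prove the two directions separately, using only that $d(\cdot,\cdot)$ is a metric and the standard LCS description of the insdel distance. Throughout, "correct $t$ insdel errors" means: for every received word $\mathbf{y}$ there is at most one codeword $\mathbf{c}$ with $d(\mathbf{c},\mathbf{y})\le t$.

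\emph{Sufficiency ($2t<d(C)\Rightarrow$ correction).} Suppose $\mathbf{y}$ is obtained from codewords $\mathbf{c}_1\neq\mathbf{c}_2$ with $d(\mathbf{c}_1,\mathbf{y})\le t$ and $d(\mathbf{c}_2,\mathbf{y})\le t$. Since $d$ is a metric, and in particular symmetric (an insertion reverses a deletion), the triangle inequality gives
\[
d(\mathbf{c}_1,\mathbf{c}_2)\le d(\mathbf{c}_1,\mathbf{y})+d(\mathbf{y},\mathbf{c}_2)\le 2t<d(C).
\]
This contradicts the definition of $d(C)$. Hence the codeword within distance $t$ of $\mathbf{y}$ is unique, and minimum-distance decoding recovers it.

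\emph{Necessity (correction $\Rightarrow 2t<d(C)$).} Suppose $d(C)\le 2t$, and pick $\mathbf{c}_1\neq\mathbf{c}_2$ with $D=d(\mathbf{c}_1,\mathbf{c}_2)\le 2t$. Use the fact that an optimal transformation can be taken to consist of deleting down to a longest common subsequence $\mathbf{s}$ and then inserting up. Thus $D=a+b$, where $a=|\mathbf{c}_1|-|\mathbf{s}|$ deletions turn $\mathbf{c}_1$ into $\mathbf{s}$ and $b=|\mathbf{c}_2|-|\mathbf{s}|$ insertions turn $\mathbf{s}$ into $\mathbf{c}_2$.

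List these $D$ elementary edits in order: first the deletions, then the insertions. Apply the first $\lceil D/2\rceil$ of them to $\mathbf{c}_1$ to obtain an intermediate word $\mathbf{y}$. Then
\[
d(\mathbf{c}_1,\mathbf{y})\le\lceil D/2\rceil\le t \quad\text{and}\quad d(\mathbf{y},\mathbf{c}_2)\le \lfloor D/2\rfloor\le t .
\]
So $\mathbf{y}$ lies within $t$ insdel errors of two distinct codewords. No decoder can distinguish whether $\mathbf{c}_1$ or $\mathbf{c}_2$ was sent, and so $C$ cannot correct $t$ insdel errors.

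The main obstacle is this necessity direction. One must justify that a shortest edit path exists and can be split at any intermediate point, with each part also being a valid insdel path. The LCS/deletion-then-insertion normal form handles this cleanly.

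A second subtlety arises for codes of fixed length $n$. There $D=2(n-|\mathbf{s}|)$ is even, and the midpoint split gives exactly $D/2=n-|\mathbf{s}|$ deletions and no insertions. Then $\mathbf{y}=\mathbf{s}$ is a pure-deletion pattern, which also shows why, as noted earlier in the paper, deletion-correction and insdel-correction coincide.
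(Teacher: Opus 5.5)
Your proof is correct. The paper does not prove this lemma; it is quoted from Levenshtein \cite{levenshtein1966binary}, so there is no in-paper argument to match, and what you give is the standard self-contained proof. Your sufficiency half is the same triangle-inequality step the paper uses in the proofs of Theorems \ref{Theo: Insertion} and \ref{Theo: General}: two distinct codewords within $n-3$ of $\mathbf{y}$ would be within $2n-6<2n-4$ of each other. Your necessity half is what links the paper's notion of optimality (correcting $n-3$ insdel errors) to the bound $d(C)>2n-6$ that those contradictions rely on.

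Two minor remarks. First, the step you flag as the main obstacle is lighter than you suggest. A shortest edit sequence exists simply because $d(\mathbf{c}_1,\mathbf{c}_2)$ is a minimum over a nonempty set of nonnegative integers. Any edit sequence of length $D$ can be cut after $\lceil D/2\rceil$ steps, which gives both bounds directly. So the deletion-then-insertion normal form (Lemma \ref{LCS}) is convenient but not needed.

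Second, your fixed-length observation is a real bonus. Because the midpoint word is the common subsequence $\mathbf{s}$ itself, $d(C)\le 2t$ already yields two codewords that are confusable under at most $t$ deletions each. Combined with sufficiency, this recovers Levenshtein's equivalence between deletion correction and insdel correction, which the paper quotes in its introduction.
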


Clearly, calculating the insdel distance is essential for assessing a code's ability to correct insdel errors. It is well known (see e.g.\cite[Lemma 1]{singhvi2026optimally}) that the insdel distance between two sequences can be calculated via their longest common
subsequence. We present their relationship in the following lemma.
\begin{Lemma}
\label{LCS}
     Let $\mathbf{a}\in\mathbb{F}_q^{n_1}$ and $\mathbf{b}\in\mathbb{F}_q^{n_2}$, then we have
\begin{equation}
    d(\mathbf{a},\mathbf{b})=n_1+n_2-2LCS(\mathbf{a},\mathbf{b}),
\end{equation}
where $LCS(\mathbf{a},\mathbf{b})$ denotes the length of the longest common subsequence of $\mathbf{a}$ and $\mathbf{b}$. 
\end{Lemma}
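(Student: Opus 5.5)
We prove the identity by showing two inequalities, $d(\mathbf{a},\mathbf{b})\le n_1+n_2-2\,LCS(\mathbf{a},\mathbf{b})$ and $d(\mathbf{a},\mathbf{b})\ge n_1+n_2-2\,LCS(\mathbf{a},\mathbf{b})$. Write $\ell=LCS(\mathbf{a},\mathbf{b})$.

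For the upper bound, fix a longest common subsequence $\mathbf{s}$ of length $\ell$. Deleting the $n_1-\ell$ symbols of $\mathbf{a}$ that are not used in a chosen embedding of $\mathbf{s}$ turns $\mathbf{a}$ into $\mathbf{s}$. Then inserting the $n_2-\ell$ symbols of $\mathbf{b}$ that lie outside its embedding of $\mathbf{s}$, each at its correct position, turns $\mathbf{s}$ into $\mathbf{b}$. This uses $n_1+n_2-2\ell$ operations in total, so $d(\mathbf{a},\mathbf{b})\le n_1+n_2-2\ell$.

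For the lower bound, take an optimal sequence of $D$ deletions and $I$ insertions transforming $\mathbf{a}$ into $\mathbf{b}$, so $d(\mathbf{a},\mathbf{b})=D+I$. The key step is a reordering argument: we claim we may assume all deletions come before all insertions. Suppose an insertion is immediately followed by a deletion. If the deletion removes the symbol just inserted, dropping both operations gives a strictly shorter sequence, contradicting optimality. Otherwise the deletion acts on an original symbol, and the two operations can be swapped, with positions suitably shifted, without changing the result. Repeating these swaps moves every deletion ahead of every insertion.

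After reordering, the intermediate string $\mathbf{c}$ obtained after the deletions is a subsequence of $\mathbf{a}$. Since $\mathbf{b}$ is obtained from $\mathbf{c}$ purely by insertions, $\mathbf{c}$ is also a subsequence of $\mathbf{b}$. Hence $\mathbf{c}$ is a common subsequence, and its length $n_1-D$ satisfies $n_1-D\le \ell$. Counting lengths gives $n_2=n_1-D+I$. Therefore
$$D+I=(n_1-|\mathbf{c}|)+(n_2-|\mathbf{c}|)=n_1+n_2-2|\mathbf{c}|\ge n_1+n_2-2\ell.$$

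The main obstacle is making the commutation step rigorous: one has to handle the index bookkeeping when swapping an insertion and a deletion. A cleaner alternative avoids reordering altogether. Track which original symbols of $\mathbf{a}$ survive every operation. These symbols appear in $\mathbf{b}$ in the same relative order, so they form a common subsequence of length at least $n_1-D$. Every symbol of $\mathbf{b}$ not among them must have been inserted, which gives $I\ge n_2-(n_1-D)$ as well. Either route yields the lower bound, and combined with the upper bound this proves the lemma.
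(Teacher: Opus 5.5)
Your proof is correct. The paper gives no proof of this lemma. It states the identity as well known and refers to Lemma~1 of Singhvi's paper, so your two-inequality argument is a self-contained replacement for that citation rather than a variant of something in the paper. For $\le$ you build an edit script through a longest common subsequence, and for $\ge$ you extract a common subsequence of length $n_1-D$ from an optimal script.

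Two small points of precision. First, in the commutation step the symbol removed by the deletion need not be an original symbol of $\mathbf{a}$; it may have been inserted by an \emph{earlier} insertion. The swap is still valid for any symbol other than the one just inserted. If such a deletion is later moved next to the insertion that created its symbol, your cancellation case gives the contradiction. The process terminates because each swap lowers the number of (insertion, later deletion) pairs by exactly one while preserving length and output, so the script stays optimal throughout.

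Second, in the survivor-tracking alternative, the inequality $I\ge n_2-(n_1-D)$ is really the length identity $n_2=n_1-D+I$. It does not follow from ``every symbol of $\mathbf{b}$ not among them was inserted.'' It still closes the argument together with $D\ge n_1-LCS(\mathbf{a},\mathbf{b})$. A cleaner finish is to let $m$ be the number of surviving original symbols. The $n_1-m$ original symbols that disappear each consume a distinct deletion, and the other $n_2-m$ symbols of $\mathbf{b}$ each come from a distinct insertion. Hence $D+I\ge n_1+n_2-2m\ge n_1+n_2-2\,LCS(\mathbf{a},\mathbf{b})$. That version needs no index bookkeeping and is the one worth writing down.
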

Lemma \ref{LCS} translates
the problem of computing the insdel distance between two sequences to determining the length of their longest common subsequence. In \cite{hirschberg1977algorithms}, an algorithm that solved the longest common subsequence problem in $\mathcal{O}(n^2)$ time was presented. Below, we give the formal definition of a Reed-Solomon code.
\begin{Definition}
    Let $\alpha_1$, $\alpha_2$, $\cdots$, $\alpha_n\in \mathbb{F}_q$ be distinct evaluation points in a finite field $\mathbb{F}_q$ of order $q\geq n$. For $k\leq n$ the $[n,k]$ RS-code defined by the evaluation set $\{\alpha_1,\alpha_2,\cdots,\alpha_n\}$ is a set of codewords $\{\mathbf{c_f}=(f(\alpha_1),f(\alpha_2),\cdots,f(\alpha_n))|f\in\mathbb{F}_q[x],deg f<k\}$. The generator matrix of this code is shown below:
\begin{equation}
  G=\begin{bmatrix}
1 & 1 & \cdots & 1\\
\alpha_1 & \alpha_2 & \cdots & \alpha_n\\
\alpha_1^2 & \alpha_2^2 & \cdots & \alpha_n^2\\
\vdots & \vdots & & \vdots\\
\alpha_1^{k-1} & \alpha_2^{k-1} & \cdots & \alpha_n^{k-1}\\
\end{bmatrix}  
\end{equation}
\end{Definition}

As a benchmark for comparison, we first present the brute force method for decoding optimal $[n,k]$ Reed-Solomon codes against insdel errors and then give an analysis of its time complexity.
\begin{algorithm}[H]
\caption{Brute-Force Decoding Method}
\label{Brute}

\begin{algorithmic}[1]
\Require Received sequence $\mathbf{y}$ of length $\ell$,
         generator matrix $G$, and code length $n$
\Ensure Decoded codeword $\mathbf{c}_o$ or \textsc{false}

\For{each $\mathbf{u} \in \mathbb{F}_q^k$}
    \State $\mathbf{c} \gets \mathbf{u}\cdot G$

    \If{$d(\mathbf{c},\mathbf{y}) \leq n-2k+1$}
        \State $\mathbf{c}_o \gets \mathbf{c}$
        \State \Return $\mathbf{c}_o$
    \EndIf
\EndFor

\State \Return \textsc{false}
\end{algorithmic}
\end{algorithm}

\begin{Theorem}
\label{Theo: Brute}
    Algorithm \ref{Brute} correctly decodes any optimal $[n,k]$ Reed-Solomon code for up to $n-2k+1$ insdel errors in $\mathcal{O}(q^k\cdot n^2)$ time. Moreover, if $k=2$, Algorithm \ref{Brute} has a time complexity of $\mathcal{O}(n^8)$.
\end{Theorem}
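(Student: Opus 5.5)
We split the argument into correctness and running time.

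\emph{Correctness.} Let $\mathbf{c}^*$ be the transmitted codeword and let $\mathbf{y}$ be obtained from it by at most $t=n-2k+1$ insdel errors, so $d(\mathbf{c}^*,\mathbf{y})\le t$. Since the code is optimal, it corrects $t$ insdel errors, and Lemma \ref{Insdel} gives $2t<d(\mathcal{C})$. Algorithm \ref{Brute} enumerates every message $\mathbf{u}\in\mathbb{F}_q^k$, hence every codeword, so it reaches $\mathbf{c}^*$ unless it has already returned. We must check that any returned codeword $\mathbf{c}$ equals $\mathbf{c}^*$. If $d(\mathbf{c},\mathbf{y})\le t$, the triangle inequality for the insdel metric (Levenshtein) gives
\[
d(\mathbf{c},\mathbf{c}^*)\le d(\mathbf{c},\mathbf{y})+d(\mathbf{y},\mathbf{c}^*)\le 2t<d(\mathcal{C}).
\]
Hence $\mathbf{c}=\mathbf{c}^*$. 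In particular the algorithm never returns \textsc{false} when at most $t$ errors occur.

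\emph{Complexity.} There are $q^k$ iterations. In each one we compute $\mathbf{u}G$ in $\mathcal{O}(kn)$ operations. We then compute $d(\mathbf{c},\mathbf{y})$ through Lemma \ref{LCS}, using the LCS algorithm of \cite{hirschberg1977algorithms} in $\mathcal{O}(n\ell)$ time. Since $\ell\le n+t\le 2n$, this is $\mathcal{O}(n^2)$, and because $k\le n$ the encoding cost is absorbed. The total is $\mathcal{O}(q^k n^2)$.

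\emph{The case $k=2$.} The main obstacle is bounding $q$ in terms of $n$, which depends on the field size of the known optimal constructions. I would invoke the explicit optimal $[n,2]$ RS constructions cited in the introduction (\cite{chen2022improved,con2024optimal}), which work over fields of size $q=\mathcal{O}(n^3)$. Substituting gives $\mathcal{O}((n^3)^2n^2)=\mathcal{O}(n^8)$.

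The statement must be read as applying to these constructions, since for an arbitrary optimal code $q$ need not be polynomially bounded. I would make this assumption on $q$ explicit, checking the quoted $\mathcal{O}(n^3)$ field size in \cite{con2024optimal} (or \cite{chen2022improved}).
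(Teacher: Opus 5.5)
Your proof is correct and follows the paper's argument. You get uniqueness of the codeword within distance $n-2k+1$ of $\mathbf{y}$; the paper asserts this, while you justify it explicitly via Lemma~\ref{Insdel} and the triangle inequality. Then you count $q^k$ iterations at $\mathcal{O}(n)+\mathcal{O}(n^2)$ each, and use $q=\mathcal{O}(n^3)$ for $k=2$. For that last step the paper uses specifically the $q=(n+1)^3$ construction of \cite{con2024optimal}, so cite that rather than \cite{chen2022improved}. Your remark that the $\mathcal{O}(n^8)$ bound concerns this construction rather than arbitrary optimal codes is a fair one.
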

\begin{proof}
    An optimal $[n,k]$ Reed-Solomon code can correct up to $n-2k+1$ insdel errors. In other words, there exists exactly one codeword $\mathbf{c_o}\in \mathcal{C}$ such that $d(\mathbf{c_o},\mathbf{y})\leq n-2k+1$, provided that the number of insdel errors does not exceed the half-Singleton bound. By calculating the insdel distance between each of the codewords and the received sequence $\mathbf{y}$, the decoder will find the unique codeword $\mathbf{c_o}$ such that $d(\mathbf{c_o},\mathbf{y})\leq n-2k+1$ and return it as the output. The method's time complexity is calculated as follows. The decoder may have to go through the whole codebook to find $\mathbf{c_o}$. As the message words are chosen from the set $\mathbb{F}_q^k$, the size of the codebook is $|\mathcal{C}|=q^k$. For each word
    $\mathbf{u}$, the decoder needs to calculate the corresponding codeword by $\mathbf{c}=\mathbf{u}\cdot G$, which takes $\mathcal{O}(n)$ time. In addition, it takes $\mathcal{O}(n^2)$ time to calculate the insdel distance $d(\mathbf{c},\mathbf{y})$ \cite{hirschberg1977algorithms}. Therefore, the time complexity of the brute-force method can be calculated as
    $q^k\cdot \big(\mathcal{O}(n)+ \mathcal{O}(n^2)\big)=\mathcal{O}(q^k\cdot n^2)$. For the moreover part, the smallest known field size for optimal $[n,2]$ Reed-Solomon codes is $q=(n+1)^3$ \cite{con2024optimal}. Correspondingly, the time complexity of Algorithm \ref{Brute} is $\mathcal{O}((n^3)^2)\cdot \big(\mathcal{O}(n) + \mathcal{O}(n^2)\big)=\mathcal{O}(n^8)$.
\end{proof}

\section{Main Results}
The number of insertions and deletions are denoted as $r$ and $s$, respectively. 
In this section, we present decoding algorithms for optimal $[n,2]$ Reed-Solomon codes for decoding insdel errors up to the half-Singleton bound. In other words, the number of insdel errors $r+s$ must be less than or equal to $n-3$. Note that, upon receiving the sequence $\mathbf{y}$, we do not know the values of $r$ and $s$, but we do know the length $\ell$ of $\mathbf{y}$. 

\subsection{Decoding $[n,2]$ RS-Codes Against Deletions}
In this subsection, we present an algorithm for optimal $[n,2]$ Reed-Solomon codes for the deletion-only scenario. Although there is existing literature solving this exact problem with the same time complexity\cite{singhvi2026optimally}, we present our alternative algorithm to complement the results of the rest of the paper.
\begin{breakablealgorithm}
\caption{Algorithm for the Deletion-Only Scenario}
\label{Deletion}

\begin{algorithmic}[1]
\Require Received sequence $\mathbf{y}$ of length $\ell$,
         generator matrix $G$, and code length $n$
\Ensure Decoded codeword $\mathbf{c}_o$ or \textsc{false}

\If{$\ell < 3$}
    \State \Return \textsc{false}
\EndIf

\State $\mathbf{I} \gets (1,2)$

\For{all $(j_1,j_2)$ such that $1 \leq j_1 < j_2 < n$}
    \State $\mathbf{J} \gets (j_1,j_2)$
    \State Solve $\mathbf{u}\cdot G_{\mathbf{J}}=\mathbf{y}_{\mathbf{I}}$
    \State $\mathbf{c} \gets \mathbf{u}\cdot G$

    \If{$\mathbf{y}$ is a subsequence of $\mathbf{c}$}
        \State $\mathbf{c}_o \gets \mathbf{c}$
        \State \Return $\mathbf{c}_o$
    \EndIf
\EndFor

\State \Return \textsc{false}
\end{algorithmic}
\end{breakablealgorithm}

\begin{Theorem}
\label{Theo: Deletion}
    Algorithm \ref{Deletion} correctly decodes any optimal $[n,2]$ Reed-Solomon codes for up to $n-3$ deletions in  $\mathcal{O}(n^3)$ time. If the number of insertions is not zero, or the number of deletions exceeds $n-3$, then Algorithm \ref{Deletion} returns FALSE.
\end{Theorem}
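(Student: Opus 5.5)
Our plan has three parts: correctness when only deletions occur, the FALSE outputs, and the running time.

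\emph{Correctness.} Suppose only $s\leq n-3$ deletions occur, so $\mathbf{y}$ is a subsequence of the transmitted codeword $\mathbf{c}_o=\mathbf{u}_o G$ and $\ell=n-s\geq 3$. Then there are indices $p_1<p_2<\cdots<p_\ell$ with $y_i=(\mathbf{c}_o)_{p_i}$. In particular $(y_1,y_2)=\mathbf{c}_o(p_1,p_2)$ with $1\leq p_1<p_2$. Since $p_\ell\leq n$ and $\ell\geq 3$, we get $p_2\leq n-1$, so $(p_1,p_2)$ lies in the loop range $1\leq j_1<j_2<n$. For $\mathbf{J}=(p_1,p_2)$, the matrix $G_{\mathbf{J}}$ is a $2\times 2$ Vandermonde matrix with distinct $\alpha$'s, hence invertible. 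So the system $\mathbf{u}G_{\mathbf{J}}=\mathbf{y}_{\mathbf{I}}$ has the unique solution $\mathbf{u}_o$, and the subsequence test succeeds at this pair. This shows the algorithm returns some codeword whenever decoding is possible.

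\emph{Uniqueness.} The loop may stop at an earlier pair, so we must show any returned $\mathbf{c}$ equals $\mathbf{c}_o$. If $\mathbf{y}$ is a subsequence of $\mathbf{c}$, then $LCS(\mathbf{c},\mathbf{y})=\ell$. By Lemma \ref{LCS}, $d(\mathbf{c},\mathbf{y})=n+\ell-2\ell=n-\ell=s\leq n-3$. The same holds for $\mathbf{c}_o$. Since the code is optimal, it corrects $n-3$ insdel errors, so by Lemma \ref{Insdel} $d(C)>2(n-3)$. The triangle inequality then gives $d(\mathbf{c},\mathbf{c}_o)\leq 2s\leq 2(n-3)<d(C)$, which forces $\mathbf{c}=\mathbf{c}_o$. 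This matches the uniqueness argument in Theorem \ref{Theo: Brute}.

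\emph{The FALSE clause.} We read it as: the algorithm returns FALSE whenever $\mathbf{y}$ is not obtainable from any codeword by at most $n-3$ deletions alone. Any returned $\mathbf{c}$ contains $\mathbf{y}$ as a subsequence, so $\mathbf{y}$ arises from $\mathbf{c}$ by exactly $n-\ell$ deletions. If $\ell<3$, line 1 returns FALSE. If $\ell>n$, no codeword contains $\mathbf{y}$, and the loop fails every test. If $\ell\geq 3$ and the loop finds a $\mathbf{c}$, then $\mathbf{y}$ is a pure-deletion image of $\mathbf{c}$ with $n-\ell\leq n-3$ deletions, and $\mathbf{c}$ is unique by the argument above. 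The main obstacle is making this statement precise. If insertions occurred but $\mathbf{y}$ happens to be a subsequence of some codeword, the algorithm returns that codeword, so "returns FALSE" can only hold in the sense that no codeword explains $\mathbf{y}$ by deletions only. We will state it in that form.

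\emph{Complexity.} The loop runs over $\binom{n-1}{2}=\mathcal{O}(n^2)$ pairs. Solving the $2\times 2$ system costs $\mathcal{O}(1)$, and computing $\mathbf{c}=\mathbf{u}G$ costs $\mathcal{O}(n)$. Testing whether $\mathbf{y}$ is a subsequence of $\mathbf{c}$ takes $\mathcal{O}(n)$ by a greedy two-pointer scan, which is cheaper than a full LCS computation. The total is $\mathcal{O}(n^2)\cdot\mathcal{O}(n)=\mathcal{O}(n^3)$.
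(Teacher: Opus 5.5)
Your proof is correct and follows the paper's route: anchor on $\mathbf{y}(1,2)$, show the true positions satisfy $j_2\le n-1$, use the minimum distance to force uniqueness, and count $\mathcal{O}(n^2)$ pairs at $\mathcal{O}(n)$ each. The one difference in the uniqueness step is phrasing: the paper says ``distinct codewords share no common subsequence of length $3$'' (from $d(C)=2n-4$ and Lemma~\ref{LCS}), while you use the triangle-inequality bound $d(\mathbf{c},\mathbf{c}_o)\le 2(n-\ell)$, which is the same estimate.

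Your caution about the FALSE clause is justified, and your treatment is sharper than the paper's. The literal clause fails: delete all of $\mathbf{c}_o$, then insert three symbols of another codeword $\mathbf{c}$ in the order they occur in $\mathbf{c}$; the algorithm returns $\mathbf{c}$, not FALSE. The paper's one-line justification (``if all pairs fail, then insertions occurred or more than $n-3$ deletions occurred'') proves only the converse implication, which is the contrapositive of correctness. Your restated form is the one that actually holds.
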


\begin{proof}
    Let $\mathbf{c_o}$ be the true codeword. In a
    deletion-only scenario, $\mathbf{y}(1,2)$ must be a subsequence of $\mathbf{c_o}$, say $\mathbf{c_o}(j_1,j_2)$. As we assume that the number of deletions is at most $n-3$, we have that $j_2\leq2+(n-3)=n-1$. Therefore, $j_1$ and $j_2$ satisfy $1\leq j_1<j_2 <n$. Let $\mathbf{I}=(1,2)$ and $\mathbf{J}=(j_1,j_2)$. Denote the solution of the linear equation $\mathbf{u}\cdot\mathbf{G_J}=\mathbf{y_I}$ by $\mathbf{u}$.  Suppose that $\mathbf{y}$ is a subsequence of $\mathbf{c}=\mathbf{u}\cdot G$. We now show by contradiction that $\mathbf{c}$ must be equal to $\mathbf{c_o}$. The Reed-Solomon code has a minimum insdel distance of $d(C)=2n-4$. Suppose that $\mathbf{c}\neq \mathbf{c_o}$, then by Lemma \ref{LCS} $\mathbf{c}$ and $\mathbf{c_o}$ do not share a common subsequence of length $3$. Since $\mathbf{y}$ has a length $\geq 3$ and is a subsequence of both $\mathbf{c}$ and $\mathbf{c_o}$, it follows that $\mathbf{c}=\mathbf{c_o}$.
    For each pair $(j_1,j_2)$, the encoding process in step 8 takes $\mathcal{O}(n)$ time. It is easy to see that checking whether $\mathbf{y}$ is a subsequence of $\mathbf{c}$ also takes $\mathcal{O}(n)$ time. Since there are $\frac{(n-1)(n-2)}{2}$ combinations of $(j_1,j_2)$, the total time complexity is $\mathcal{O}(n^2)\cdot\big(\mathcal{O}(n)+\mathcal{O}(n)\big)=\mathcal{O}(n^3)$. If all possible combinations of $(j_1,j_2)$ fail, then either the number of insertions is not zero or the number of deletions exceeds $n-3$. The algorithm returns FALSE.
\end{proof}

\subsection{Decoding $[n,2]$ RS-Codes Against Insertions}
Unlike the deletion-only scenario, for efficiently decoding optimal $[n,2]$ Reed-Solomon codes, no existing literature addresses the insertion-only scenario. Therefore, in this subsection, we present a novel algorithm addressing this problem.

\begin{algorithm}[H]
\caption{Algorithm for the Insertion-Only Scenario}
\label{Insertion}

\begin{algorithmic}[1]
\Require Received sequence $\mathbf{y}$ of length $\ell$,
         generator matrix $G$, and code length $n$
\Ensure Decoded codeword $\mathbf{c}_o$ or \textsc{false}

\If{$\ell > 2n-3$}
    \State \Return \textsc{false}
\EndIf

\State $\mathbf{J} \gets (1,2)$

\For{all $(i_1,i_2)$ such that $1 \leq i_1 < i_2 < n$}
    \State $\mathbf{I} \gets (i_1,i_2)$
    \State Solve $\mathbf{u}\cdot G_{\mathbf{J}}=\mathbf{y}_{\mathbf{I}}$
    \State $\mathbf{c} \gets \mathbf{u}\cdot G$

    \If{$\mathbf{c}$ is a subsequence of $\mathbf{y}$}
        \State $\mathbf{c}_o \gets \mathbf{c}$
        \State \Return $\mathbf{c}_o$
    \EndIf
\EndFor

\State \Return \textsc{false}
\end{algorithmic}
\end{algorithm}

\begin{Theorem}
\label{Theo: Insertion}
    Algorithm \ref{Insertion} correctly decodes any optimal $[n,2]$ Reed-Solomon codes for up to $n-3$ insertions in  $\mathcal{O}(n^3)$ time. If the number of deletions is not zero, or the number of insertions exceeds $n-3$, then Algorithm \ref{Insertion} returns FALSE.
\end{Theorem}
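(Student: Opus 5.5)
The plan mirrors the proof of Theorem \ref{Theo: Deletion}, with the roles of $\mathbf{y}$ and the codeword swapped. There are three steps: show that the true codeword is reached by some pair $(i_1,i_2)$, show that any returned codeword equals the true one, and count the operations.

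\emph{The true codeword is generated.} Let $\mathbf{c_o}$ be the true codeword, and suppose only $r\le n-3$ insertions occurred. Then $\mathbf{c_o}$ is a subsequence of $\mathbf{y}$ and $\ell=n+r\le 2n-3$, so the length test in step 1 is passed. Fix an embedding of $\mathbf{c_o}$ into $\mathbf{y}$. Its first two symbols $\mathbf{c_o}(1,2)$ sit at some positions $\mathbf{y}(i_1,i_2)$ with $i_1<i_2$. At most $r$ inserted symbols can precede the second embedded symbol, so $i_2\le 2+r\le n-1$. Hence the pair $(i_1,i_2)$ is among those enumerated by the loop.

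For this pair, take $\mathbf{J}=(1,2)$. The matrix $G_{\mathbf{J}}$ is a $2\times 2$ Vandermonde matrix with $\alpha_1\ne\alpha_2$, so it is invertible. The equation $\mathbf{u}\cdot G_{\mathbf{J}}=\mathbf{y}_{\mathbf{I}}=\mathbf{c_o}(1,2)$ therefore has the unique solution equal to the true message. For this pair, $\mathbf{c}=\mathbf{c_o}$ is a subsequence of $\mathbf{y}$, so the algorithm returns at the latest at this iteration.

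\emph{Any returned codeword is the true one.} This is the main step, since the deletion argument does not transfer directly. There, $\mathbf{y}$ was a common subsequence of both codewords. Here, both codewords are subsequences of $\mathbf{y}$, so we need a counting argument on positions.

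Suppose the algorithm returns a codeword $\mathbf{c}\neq\mathbf{c_o}$ with $\mathbf{c}$ a subsequence of $\mathbf{y}$. Since $d(\mathcal{C})=2n-4$, Lemma \ref{LCS} gives $LCS(\mathbf{c},\mathbf{c_o})\le 2$. Fix embeddings of $\mathbf{c}$ and of $\mathbf{c_o}$ into $\mathbf{y}$. Each uses a set of $n$ positions among $\ell\le 2n-3$ positions. The two sets therefore intersect in at least $2n-(2n-3)=3$ positions.

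The symbols of $\mathbf{y}$ at these common positions, read in increasing order, form a subsequence of $\mathbf{c}$. They also form a subsequence of $\mathbf{c_o}$, because both embeddings are order preserving. This gives a common subsequence of length at least $3$, a contradiction. Hence $\mathbf{c}=\mathbf{c_o}$.

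The same argument shows that any returned $\mathbf{c}$ is a codeword from which $\mathbf{y}$ arises by exactly $\ell-n\le n-3$ insertions. Therefore, if the actual error pattern contains a deletion or more than $n-3$ insertions, no codeword is consistent with an insertion-only pattern of size at most $n-3$. In that case every test fails and the algorithm returns FALSE. This is the contrapositive reading used in Theorem \ref{Theo: Deletion}.

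\emph{Complexity.} There are $\binom{n-1}{2}=\mathcal{O}(n^2)$ pairs $(i_1,i_2)$. For each pair:
\begin{itemize}
\item solving the $2\times 2$ system takes $\mathcal{O}(1)$ time;
\item encoding $\mathbf{c}=\mathbf{u}\cdot G$ takes $\mathcal{O}(n)$ time;
\item checking whether $\mathbf{c}$ is a subsequence of $\mathbf{y}$ by a greedy two-pointer scan takes $\mathcal{O}(\ell)=\mathcal{O}(n)$ time, since $\ell\le 2n-3$.
\end{itemize}
The total is $\mathcal{O}(n^2)\cdot\mathcal{O}(n)=\mathcal{O}(n^3)$.
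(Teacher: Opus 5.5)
\textbf{Uniqueness step (correct, different route).} Your completeness step (the pair $(i_1,i_2)$ with $i_2\le 2+r\le n-1$, invertibility of $G_{\mathbf{J}}$) and your complexity count match the paper. For uniqueness you take a different route, and it is correct.

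The paper uses no positional count. Since $\mathbf{c}$ and $\mathbf{c_o}$ are both subsequences of $\mathbf{y}$, it notes $d(\mathbf{c},\mathbf{y})=d(\mathbf{c_o},\mathbf{y})=\ell-n$. The triangle inequality for the insdel metric then gives $d(\mathbf{c},\mathbf{c_o})\le 2\ell-2n\le 2n-6<2n-4$.

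Your pigeonhole gives $LCS(\mathbf{c},\mathbf{c_o})\ge 2n-\ell$, which by Lemma~\ref{LCS} is exactly the bound $d(\mathbf{c},\mathbf{c_o})\le 2\ell-2n$. So the two arguments are quantitatively identical. Yours is self-contained and exhibits the common subsequence explicitly; the paper's is one line once the metric property is granted. Your remark that a counting argument is \emph{needed} is therefore not quite right: the triangle inequality suffices.

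\textbf{Gap in the FALSE clause.} The problem is your final paragraph. The inference ``if the actual error pattern contains a deletion \dots, no codeword is consistent with an insertion-only pattern'' is a non sequitur. What the channel did to $\mathbf{c_o}$ does not by itself tell you whether some codeword is a subsequence of $\mathbf{y}$, and in general the inference is false:
\begin{itemize}
\item If the channel deletes $c_{o,1}$ and reinserts the same symbol in front, then $\mathbf{y}=\mathbf{c_o}$ and the algorithm returns $\mathbf{c_o}$.
\item If it deletes all of $\mathbf{c_o}$ and inserts a different codeword $\mathbf{c}$, then $\mathbf{y}=\mathbf{c}$ and the algorithm returns $\mathbf{c}\neq\mathbf{c_o}$.
\end{itemize}
The paper's own sentence (``if all combinations fail, then either the number of deletions is not zero or \dots'') does not close this either. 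It is the contrapositive of completeness, i.e.\ the \emph{converse} of the stated claim, not the ``contrapositive reading'' you invoke.

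\textbf{What you can prove.} First, if $s=0$ and $r\ge n-2$, the length test fires because $\ell\ge 2n-2$. Second, if $r+s\le n-3$ and $\mathbf{c_o}$ is not a subsequence of $\mathbf{y}$ (the deletions are genuinely needed), the algorithm returns FALSE. Your pigeonhole adapts to this case:
\begin{itemize}
\item the $n-s$ surviving symbols of $\mathbf{c_o}$ occupy $n-s$ positions of $\mathbf{y}$ in order;
\item a candidate $\mathbf{c}$ occupies $n$ positions;
\item since $\ell=n-s+r$, they share at least $n-r\ge 3$ positions.
\end{itemize}
This gives a common subsequence of length at least $3$, forcing $\mathbf{c}=\mathbf{c_o}$. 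That contradicts $\mathbf{c_o}$ not being a subsequence of $\mathbf{y}$. Beyond the radius with deletions present, the second example above shows that no such guarantee holds.
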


\begin{proof}
    Let $\mathbf{c_o}$ be the true codeword. In an insertion-only scenario, $\mathbf{c_o}(1,2)$ must be a subsequence of $\mathbf{y}$, say $\mathbf{y}(i_1,i_2)$.  As we assume that the number of insertions is at most $n-3$, we have that $i_2\leq2+(n-3)=n-1$. Therefore, $i_1$ and $i_2$ satisfy $1\leq i_1<i_2 <n$. Let $\mathbf{J}=(1,2)$ and $\mathbf{I}=(i_1,i_2)$. Denote the solution of the linear equation $\mathbf{u}\cdot\mathbf{G_J}=\mathbf{y_I}$ by $\mathbf{u}$.  Suppose $\mathbf{c}=\mathbf{u}\cdot G$ is a subsequence of $\mathbf{y}$. We now show by contradiction that $\mathbf{c}$ must be equal to $\mathbf{c_o}$. Suppose that $\mathbf{c}\neq \mathbf{c_o}$. Since both $\mathbf{c}$ and $\mathbf{c_o}$ are subsequences of $\mathbf{y}$, we have $d(\mathbf{c},\mathbf{y})=d(\mathbf{c_o},\mathbf{y})=\ell-n$. Therefore, by the triangle inequality, $d(\mathbf{c},\mathbf{c_o})\leq d(\mathbf{c},\mathbf{y})+d(\mathbf{c_o},\mathbf{y})\leq 2\ell-2n\leq2n-6$, contradicting $d(C)=2n-4$. It follows that $\mathbf{c}= \mathbf{c_o}$. For each pair $(i_1,i_2)$, the encoding process in step 8 takes $\mathcal{O}(n)$ time. Checking whether $\mathbf{c}$ is a subsequence of $\mathbf{y}$ also takes $\mathcal{O}(n)$ time. Since there are $\frac{(n-1)(n-2)}{2}$ combinations of $(i_1,i_2)$, the total time complexity is $\mathcal{O}(n^2)\cdot\big(\mathcal{O}(n)+\mathcal{O}(n)\big)=\mathcal{O}(n^3)$. If all possible combinations of $(i_1,i_2)$ fail, then either the number of deletions is not zero or the number of insertions exceeds $n-3$. The algorithm returns FALSE.
\end{proof}

\subsection{Decoding $[n,2]$ RS-Codes Against Insdel Errors}
We now present the main result of the paper, which is an algorithm for decoding optimal $[n,2]$ Reed-Solomon codes for up to $n-3$ insdel errors in general.

\begin{algorithm}[H]
\caption{Algorithm for the General Insdel Scenario}
\label{General}
\begin{algorithmic}[1]
\Require Received sequence $\mathbf{y}$ of length $\ell$,
         generator matrix $G$, and code length $n$
\Ensure Decoded codeword $\mathbf{c}_o$ or \textsc{false}

\If{$\ell<3$ or $\ell>2n-3$}
    \State \Return \textsc{false}
\EndIf

\For{all $(j_1,j_2)$ such that $1\le j_1<j_2\le n$}
    \State $\mathbf{J}\gets(j_1,j_2)$
    \For{$i=1$ to $\ell-1$}
        \State $\mathbf{I}\gets(i,i+1)$
        \State Solve $\mathbf{u}\cdot G_{\mathbf{J}}=\mathbf{y}_{\mathbf{I}}$
        \State $\mathbf{c}\gets\mathbf{u}\cdot G$
        \If{$d(\mathbf{c},\mathbf{y})\le n-3$}
            \State $\mathbf{c_o}\gets\mathbf{c}$
            \State \Return $\mathbf{c_o}$
        \EndIf
    \EndFor
\EndFor

\State \Return \textsc{false}
\end{algorithmic}
\end{algorithm}

To facilitate the proof of Algorithm \ref{General}'s correctness, we introduce the following Lemma.

\begin{Lemma}
\label{adjacent}
    Let $\mathbf{c}\in \mathbb{F}_q^n$ be a sequence sent by the transmitter. Let $\mathbf{y}$ be the received sequence of length $\ell$ obtained by deleting $s$ symbols from and then inserting $r$ symbols into $\mathbf{c}$, where $r+s\leq n-3$. Then, there exists $1\leq i\leq \ell-1$ and $1\leq j_1<j_2\leq n$ such that $\mathbf{c}(j_1,j_2)=\mathbf{y}(i,i+1)$. 
\end{Lemma}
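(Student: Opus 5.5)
We track which symbols of $\mathbf{c}$ survive into $\mathbf{y}$ and argue by pigeonhole that two survivors must sit next to each other in $\mathbf{y}$. After the $s$ deletions, exactly $n-s$ symbols of $\mathbf{c}$ remain, in their original order. Call the remaining positions $p_1<p_2<\cdots<p_{n-s}$ in $\mathbf{c}$. After the $r$ insertions, these surviving symbols occupy positions $q_1<q_2<\cdots<q_{n-s}$ in $\mathbf{y}$, with $y_{q_m}=c_{p_m}$ for every $m$. The other $r$ positions of $\mathbf{y}$ hold inserted symbols, so $\ell=n-s+r$. It suffices to find some $m$ with $q_{m+1}=q_m+1$. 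Then $i=q_m$, $j_1=p_m$ and $j_2=p_{m+1}$ satisfy $j_1<j_2$, $1\le i\le \ell-1$, and $\mathbf{c}(j_1,j_2)=\mathbf{y}(i,i+1)$.

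Suppose for contradiction that no two consecutive survivors are adjacent in $\mathbf{y}$. Then each of the $n-s-1$ gaps between $q_m$ and $q_{m+1}$ contains at least one inserted symbol, so $r\ge n-s-1$. This gives $r+s\ge n-1$, contradicting $r+s\le n-3$.

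The argument also needs at least two survivors, so that a pair $m,m+1$ exists. This holds because $n-s\ge n-(n-3)=3$. So there are at least $3$ survivors, and the pigeonhole step applies.

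The only step needing care is justifying the model of the channel. Deleting first and then inserting lets us identify surviving symbols with their images in $\mathbf{y}$ in an order-preserving way, and the statement of the lemma explicitly fixes this order of operations. The counting itself is immediate. The slack is two: only $r+s\le n-2$ is needed.
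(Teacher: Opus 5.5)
Your proof is correct and is the same as the paper's: both note that separating all $n-s$ surviving symbols requires at least $n-s-1$ insertions, while $r\le n-s-3$. Your version is slightly more careful, since you track the positions $p_m, q_m$ explicitly and check that at least two survivors exist.
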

\begin{proof}
    After the $s$ deletions, we obtain an intermediate sequence $\mathbf{x}$ with a length of $n-s$. The sequence $\mathbf{x}$ consists of symbols from $\mathbf{c}$. To separate each symbol in $\mathbf{x}$ from each other, at least $n-s-1$ inserted symbols are needed. However, since $r+s\leq n-3$, we have that $r\leq n-s-3<n-s-1$. Therefore, after the $r$ insertions, there must be at least two symbols from $\mathbf{c}$ that are adjacent to each other, even if they are not necessarily adjacent in $\mathbf{c}$. Therefore, we conclude that there exists $1\leq i\leq \ell-1$ and $1\leq j_1<j_2\leq n$ such that $\mathbf{c}(j_1,j_2)=\mathbf{y}(i,i+1)$. 
\end{proof}

We now provide a proof for Algorithm \ref{General} and give an analysis of its time complexity.  
\begin{Theorem}
\label{Theo: General}
     Algorithm \ref{General} correctly decodes any optimal $[n,2]$ Reed-Solomon codes for up to $n-3$ insdel errors in  $\mathcal{O}(n^5)$ time. 
\end{Theorem}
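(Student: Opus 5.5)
The proof splits into correctness and running time. Correctness rests on Lemma \ref{adjacent} and the uniqueness argument already used for Theorem \ref{Theo: Brute}.

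\textbf{Correctness.} Let $\mathbf{c_o}$ be the transmitted codeword and suppose $\mathbf{y}$ arises from $\mathbf{c_o}$ by $s$ deletions and $r$ insertions with $r+s\le n-3$.

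First, the early exit on $\ell$ never discards a valid input. We have $\ell=n-s+r$. Since $s\le n-3$, $\ell\ge 3$; since $r\le n-3$, $\ell\le 2n-3$. So step 1 does not return \textsc{false} wrongly.

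Second, by Lemma \ref{adjacent} there exist an index $i\in\{1,\dots,\ell-1\}$ and a pair $1\le j_1<j_2\le n$ with $\mathbf{c_o}(j_1,j_2)=\mathbf{y}(i,i+1)$. The double loop ranges over all such $(j_1,j_2)$ and $i$, so it eventually reaches this triple. There, with $\mathbf{J}=(j_1,j_2)$ and $\mathbf{I}=(i,i+1)$, the message $\mathbf{u_o}$ of $\mathbf{c_o}$ satisfies $\mathbf{u_o}\cdot G_{\mathbf J}=\mathbf{y_I}$. The $2\times 2$ matrix $G_{\mathbf J}$ is Vandermonde in $\alpha_{j_1}\neq\alpha_{j_2}$, hence invertible. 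So the solved $\mathbf{u}$ equals $\mathbf{u_o}$, and $\mathbf{c}=\mathbf{c_o}$. Since $d(\mathbf{c_o},\mathbf{y})\le r+s\le n-3$, the test passes at the latest at this iteration.

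Third, the algorithm might return earlier, at some other codeword $\mathbf{c}$ with $d(\mathbf{c},\mathbf{y})\le n-3$. In that case the triangle inequality gives $d(\mathbf{c},\mathbf{c_o})\le 2n-6<2n-4=d(\mathcal C)$, so $\mathbf{c}=\mathbf{c_o}$, exactly as in Theorem \ref{Theo: Insertion}. Hence the output is always $\mathbf{c_o}$.

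\textbf{Complexity.} There are $\binom{n}{2}=\mathcal O(n^2)$ pairs $(j_1,j_2)$ and at most $\ell-1\le 2n-4=\mathcal O(n)$ values of $i$.

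Each iteration does the following work:
\begin{itemize}
\item solving the $2\times2$ system costs $\mathcal O(1)$;
\item encoding $\mathbf{c}=\mathbf{u}\cdot G$ costs $\mathcal O(n)$;
\item computing $d(\mathbf c,\mathbf y)$ through Lemma \ref{LCS} with the LCS algorithm of \cite{hirschberg1977algorithms} costs $\mathcal O(n\ell)=\mathcal O(n^2)$.
\end{itemize}

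The total is $\mathcal O(n^2)\cdot\mathcal O(n)\cdot\mathcal O(n^2)=\mathcal O(n^5)$.

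The only step that needs care is confirming that the pair from Lemma \ref{adjacent} has $j_1<j_2$ in the right order and that the loop ranges cover it. The lemma gives this directly, because insertions and deletions preserve the relative order of surviving symbols.
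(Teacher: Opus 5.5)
Your proof is correct and follows essentially the same route as the paper. Lemma~\ref{adjacent} guarantees the loop reaches a triple $(i,j_1,j_2)$ that reproduces $\mathbf{c_o}$; the triangle inequality against $d(\mathcal{C})=2n-4$ rules out returning any other codeword; and $\mathcal{O}(n^3)$ iterations at $\mathcal{O}(n^2)$ each give $\mathcal{O}(n^5)$. You also make explicit several points the paper leaves implicit: the early exit on $\ell$ never rejects a valid input, $G_{\mathbf J}$ is an invertible Vandermonde matrix so the solved $\mathbf{u}$ equals $\mathbf{u_o}$, and the bound $\ell\le 2n-3$ is what keeps the $i$-loop at $\mathcal{O}(n)$ iterations.
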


\begin{proof}
    Let $\mathbf{c_o}$ be the true codeword. By Lemma \ref{adjacent}, there exist $1\leq i \leq \ell-1$ and $1\leq j_1<j_2\leq n$ such that $\mathbf{y}(i,i+1)=\mathbf{c_o}(j_1,j_2)$. For each possible combination of $(i,j_1,j_2)$, let $\mathbf{I}=(i,i+1)$ and $\mathbf{J}=(j_1,j_2)$. Denote the solution of the linear equation $\mathbf{u}\cdot\mathbf{G_J}=\mathbf{y_I}$ as $\mathbf{u}$. Suppose that $\mathbf{c}=\mathbf{u}\cdot G$ satisfies $d(\mathbf{c},\mathbf{y})\leq n-3$. We now show by contradiction that $\mathbf{c}$ must be equal to $\mathbf{c_o}$. Suppose that $\mathbf{c}\neq \mathbf{c_o}$. Clearly, $d(\mathbf{c_o},\mathbf{y})\leq n-3$. Since the insdel distance is a metric, $d(\mathbf{c,\mathbf{c_o}})\leq d(\mathbf{c},\mathbf{y})+d(\mathbf{c_o},\mathbf{y})\leq 2n-6$, which contradicts the fact that $d(C)=2n-4$. It follows that $\mathbf{c}= \mathbf{c_o}$. If $d(\mathbf{c},\mathbf{y})>n-3$, the algorithm moves on to the next combination of $(i,j_1,j_2)$. If no codeword satisfying $d(\mathbf{c},\mathbf{y})\leq n-3$ is found for all combinations of $(i,j_1,j_2)$, then it indicates that the number of insdel errors exceeds the half-Singleton bound. The algorithm returns FALSE. The time complexity of Algorithm \ref{General} is calculated as follows. For each combination of $(i,j_1,j_2)$, it takes $\mathcal{O}(n)$ time for the encoding process in step 9. It takes further $\mathcal{O}(n^2)$ time to calculate $d(\mathbf{c},\mathbf{y})$\cite{hirschberg1977algorithms}. Since there are $\frac{n(n-1)(\ell-1)}{2}$ possible combinations of $(i,j_1,j_2)$, the total time complexity is $\big(\mathcal{O}(n)+\mathcal{O}(n^2)\big)\cdot \mathcal{O}(n^3)=\mathcal{O}(n^5)$. 
\end{proof}

\section{Conclusions}
In this paper, we focus on efficient decoding algorithms for optimal $[n,2]$ Reed-Solomon codes up to the half-Singleton bound. The literature so far has produced a decoding algorithm for the deletion-only scenario with a time complexity of $\mathcal{O}(n^3)$\cite{singhvi2026optimally}. Our main result is an algorithm for decoding insdel errors in general with a time complexity of $\mathcal{O}(n^5)$. It improves on the brute-force method with a time complexity of $\mathcal{O}(n^8)$. Additionally, we presented two algorithms with a time complexity of $\mathcal{O}(n^3)$ for deletion-only and insertion-only scenarios, respectively.  We consider our work as a first step toward developing efficient insdel decoding algorithms for $[n,k]$ optimal Reed-Solomon codes.

\bibliographystyle{plain}
\bibliography{references}

\newpage
\appendix
\begin{Theorem}
\label{Theo: Appendix}
Let $\mathcal{C}$ be an $[n,k]_q$ RS code. The deterministic insdel unique-decoding algorithm described in Corollary 1 of \cite{banerjee2025decoding} can correct up to $t$ insdel errors only if
\begin{equation}
    t\leq (k+1)n-\sqrt{(k^2+2k)n^2+kn}.
\end{equation}
\end{Theorem}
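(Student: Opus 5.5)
The plan is to take the sufficient condition in Corollary 1 of \cite{banerjee2025decoding}, which is stated in terms of $n$, $k$, $t$ and the auxiliary design parameters of their interpolation-based decoder (list size, multiplicity, degree bounds). From it I will extract a necessary inequality in $n,k,t$ alone, and then solve that inequality as a quadratic in $t$. The displayed bound is exactly what such a quadratic produces. Expanding
\begin{equation*}
t\le (k+1)n-\sqrt{(k^2+2k)n^2+kn}
\end{equation*}
and squaring gives $t^2-2(k+1)nt+n^2-kn\ge 0$. This is equivalent to
\begin{equation*}
(n-t)^2\ \ge\ kn(2t+1),
\end{equation*}
so the target is to show that Corollary 1 forces this Johnson/Guruswami--Sudan-type inequality.

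\textbf{Step 1: identify where the decoder can fail.} I would first write down the hypothesis under which Corollary 1 guarantees unique decoding. This is the inequality that makes the number of constraints in the interpolation step smaller than the number of free coefficients, while keeping the interpolating polynomial from vanishing on the true codeword. The free coefficients grow with the weighted degree, and the constraints scale with the number of pairs of positions that can be matched between the received word (length up to $n+t$) and the codeword. Both sides are then bounded in terms of the free parameters.

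\textbf{Step 2: eliminate the free parameters.} I would optimize over the multiplicity and degree parameters, or take the limit as they grow, so that only a relation among $n,k,t$ remains. The aim is to show that solvability for some choice of parameters implies $(n-t)^2\ge kn(2t+1)$. The $+kn$ term, which appears as the "$+1$" in $2t+1$, should come from a lower-order term in the count that survives the optimization. I expect this step to be the main obstacle, since it depends on the precise form of the constants in \cite{banerjee2025decoding}. I would first try the crude approach: drop every parameter-dependent slack term in the direction that only weakens the condition, and check that what remains still reduces to the quadratic above.

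\textbf{Step 3: solve the quadratic.} Consider $t^2-2(k+1)nt+n^2-kn\ge0$. Its discriminant is
\begin{equation*}
4\bigl((k+1)^2n^2-n^2+kn\bigr)=4\bigl((k^2+2k)n^2+kn\bigr)>0,
\end{equation*}
so the roots are $t_\pm=(k+1)n\pm\sqrt{(k^2+2k)n^2+kn}$. The inequality holds iff $t\le t_-$ or $t\ge t_+$. Since $t_+>(k+1)n>n$, while any meaningful number of insdel errors satisfies $t<n$, only the branch $t\le t_-$ is possible. This gives the claimed necessary condition.
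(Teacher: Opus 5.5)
Your Step 3 is correct and matches the paper's algebra: the paper completes the square to compare $[t-(k+1)n]^2$ with $(k^2+2k)n^2+kn$, then discards the upper root because $t<n$. The gap is in Steps 1--2, which carry the only substantive content and are not actually carried out. You never state the hypothesis of Corollary 1 of \cite{banerjee2025decoding}. Instead you plan to reconstruct it from interpolation counting (constraints versus free coefficients, optimizing multiplicities and degrees), and you concede that the target inequality ``should'' come out and that the $+kn$ term is a guess. As written, nothing connects the cited algorithm to $(n-t)^2\ge kn(2t+1)$, so the proposal does not prove the statement. Re-deriving the condition is also the wrong target. The theorem concerns the algorithm under the hypothesis as stated in that corollary, and an independent reconstruction could give a different inequality.

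The missing input is that Corollary 1 is stated under the explicit hypothesis $t\le n-\sqrt{(1+\epsilon)kn(2t+1)}$ for some $\epsilon>0$. The factor $2t+1$ appears there directly; nothing needs to be extracted from an optimization, and the only slack left is the factor $1+\epsilon$. With that hypothesis, your ``crude approach'' is exactly what works. Since $1+\epsilon>1$, you get $\sqrt{kn(2t+1)}<n-t$. In particular $n-t>0$, so squaring is legitimate and gives $(n-t)^2>kn(2t+1)$. The same hypothesis also gives $t<n$ directly. That is a cleaner justification than ``any meaningful number of insdel errors'', and it is needed because squaring loses the sign information that rules out the branch $t\ge t_+$. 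Your Step 3 then finishes the argument and yields the strict inequality $t<(k+1)n-\sqrt{(k^2+2k)n^2+kn}$.
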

\begin{proof}
The condition for the existence of the decoding algorithm is expressed by the following inequality.
\begin{equation}
    t\leq n-\sqrt{(1+\epsilon)\cdot kn\cdot(2t+1)}
\end{equation}

Moving the terms in (9) gives:
$$\sqrt{(1+\epsilon)\cdot kn\cdot(2t+1)}\leq n-t $$
Since it is stated that $\epsilon>0$, $(1+\epsilon)$ must be greater than $1$. Therefore, we have that 
$$\sqrt{kn\cdot(2t+1)} <\sqrt{(1+\epsilon)\cdot kn\cdot(2t+1)}\leq n-t$$
i.e. 
\begin{equation}
    \sqrt{kn\cdot(2t+1)} < n-t.
\end{equation}
Squaring both sides of (10) gives:
$$kn\cdot(2t+1)<(n-t)^2=t^2-2nt+n^2$$
Moving the terms and completing the square gives:
$$t^2-2(k+1)nt+(k+1)^2n^2>(k^2+2k)n^2+kn$$
i.e.
$$[t-(k+1)n]^2>(k^2+2k)n^2+kn$$
Since $t$ must be less than $n$, we have that:
\begin{equation}
    t<(k+1)n-\sqrt{(k^2+2k)n^2+kn}.
\end{equation}
\end{proof}
\begin{Remark}
Given a positive integer $k\geq 2$, as $n$ goes to infinity, we have that:
$$\frac{t}{n}<(k+1)-\sqrt{(k^2+2k)}=\frac{1}{(k+1)+\sqrt{(k^2+2k)}}<\frac{1}{2k}$$
By comparison, an optimal $[n,k]$ RS code can correct up to $t=n-2k+1$ insdel errors. Asymptotically, we have:
$$\lim_{n\rightarrow\infty}\frac{n-2k+1}{n}=1.$$
In conclusion, the method proposed in \cite{banerjee2025decoding} applies only to scenarios where the number of insdel errors is significantly below the upper limit allowed by the half-singleton bound.
\end{Remark}

\end{document}